\newcommand{\ket}[1]{\left|#1\right\rangle}
\newcommand{\bra}[1]{\left\langle #1\right|}
\newcommand{\Z}{\mathbb{Z}}
\newcommand{\re}{\mathbb{R}}
\newcommand{\cx}{\mathbb{C}} 
\newcommand{\De}{\Delta} 
\newcommand{\cA}{\mathcal{A}} 
\newcommand{\cH}{\mathcal{H}}
\newcommand{\conj}{\mathrm{conj}}
\newcommand{\cnot}{\mathrm{CNOT}} 
\newcommand{\modu}{\langle e^{\imath\theta} I \rangle}
\newcommand\defn[1]{\textsl{#1}}
\newtheorem{lemma}{Lemma}
\newtheorem{theorem}{Theorem} \newtheorem{proposition}{Proposition}
\begin{document}
\title{On the epistemic view of quantum states}
\author{Michael Skotiniotis}
\author{Aidan Roy} \author{Barry C. Sanders} \affiliation{Institute
for Quantum Information Science, University of Calgary,2500 University
Drive NW,Calgary AB, T2l 1N1, Canada}
\date{\today}
\begin{abstract}
We investigate the strengths and limitations of the Spekkens toy model,
which is a local hidden variable model that replicates many important
properties of quantum dynamics. First, we present a set of five axioms that
fully encapsulate Spekkens' toy model. We then test whether these axioms
can be extended to capture more quantum phenomena, by allowing operations on
epistemic as well as ontic states. We discover that the resulting group of
operations is isomorphic to the projective extended Clifford Group for two
qubits. This larger group of operations results in a physically unreasonable model; consequently, we claim that a relaxed definition of valid operations in Spekkens' toy model cannot produce an equivalence with the Clifford Group for
two qubits. However, the new operations do serve as tests for correlation in a two toy bit model, analogous to the well known Horodecki criterion for the separability of quantum states.
\end{abstract}
\maketitle
\section{\label{sec:intro}Introduction}

Spekkens introduced a toy theory that demonstrates how a
local hidden variable model with a classical information-based restriction can
capture a great deal of seemingly quantum phenomena, including
non-commutativity of measurement, remote steering and
teleportation~\cite{Spekkens04}. Spekkens' toy model (STM) builds upon other
information-based
models with similar aims \cite{Zeilinger, Fuchs, CBH,
Kirkpatrick}. Although by no means a proposed axiomatization of quantum theory,
STM aims to
strengthen the view that the quantum state is a statistical distribution over a
hidden variable space in which there exists a balance of knowledge and ignorance
about the true state of the system.

In this paper, we axiomatize STM, and test it by relaxing its axioms.  We claim
that STM can be formalized into five axioms describing valid states, allowable
transformations,
measurement outcomes, and composition of systems.  Arguing on empirical grounds,
we relax the axiom regarding valid
operations on toy bits to obtain larger groups of operations for one and two toy
bits.  We claim that these larger groups are isomorphic to the
projective extended Clifford Group for one and two
qubits respectively. However, these larger groups of operations contain elements
that do not necessarily
compose under the tensor product.  That is to say, there exist operations that
do not take valid states to valid states when composed under the tensor product,
as one would demand of a physical model.  These operations are analogous to
positive maps in quantum theory.  Just as positive (but not completely positive)
maps can be used to test whether a quantum state is entangled or
not~\cite{Horodecki's}, validity-preserving (but not completely
validity-preserving) maps can be used to test for correlations in the two toy bit STM. Finally, we claim that relaxing the transformations of STM to an epistemic perspective gives rise to physically unreasonable alternatives, and as such, no equivalence with the extended Clifford Group for two qubits can be established by relaxing STM's operations.

The outline of the paper is as follows. In Section \ref{sec:STM}, we present STM
as a series of axioms and compare them to the axioms of quantum theory.  We
provide a
brief review of the original model for an elementary toy system (a toy bit) and
for two toy bits and provide a number of
different ways of representing one and two toy bits. In
Section~\ref{sec:relax} we propose a relaxation of the
criterion for valid operations on elementary systems, identify the resulting
groups of operations, and analyze both their mathematical and physical
properties.
We conclude with a discussion of our results
in Section~\ref{sec:conclusion}.

\section{\label{sec:STM}The Spekkens toy model and quantum theory}

In this section we present STM in its axiomatic basis and state the axioms of
quantum mechanics for comparison. Using the axioms of STM we develop several
ways of representing toy bits including a
vector space, a tetrahedron, and a toy analogue of the Bloch sphere. We also
develop two ways of representing two toy bits: a product
space and a four-dimensional cube.  We show how states, operations, and tensor
products stem from the axioms of STM, and we draw parallels to the equivalent
axioms and concepts in quantum theory.

STM is based on a simple classical principle called
the \emph{knowledge balance principle}:\begin{quote}If one has maximal
knowledge, then for every system, at every time,
the amount of knowledge one possesses about the ontic state of the
system at that time must equal the amount of knowledge one lacks.\end{quote}

Spekkens realizes the knowledge balance principle using canonical sets of yes/no
questions, which
are minimal sets of questions that completely determine the actual state of a
system. For any given
system, at most half of a canonical set of questions can be answered. The state
a system is actually in is called an ontic state, whereas the state of
knowledge is called an epistemic state. 

STM can be succinctly summarized using the following axioms:
\begin{description}
\item[STM~0:] All systems obey the knowledge balance principle. 
\item[STM~1:] A single toy bit is described by a single hidden variable that can
be in 1 of 4 possible states, the \defn{ontic} states. The knowledge balance
principle insists that the hidden variable is known to be in a subset of 2 or 4
of the ontic states---that subset is the \defn{epistemic} state of the system.
\item[STM~2:] A \defn{valid reversible operation} is a permutation of ontic
states of the system that also permutes the epistemic states amongst themselves.
\item[STM~3:] A \defn{reproducible measurement} is a partition of the ontic
states into
a set of disjoint epistemic states, with the outcome of a measurement being a
specific epistemic state. The probability of a particular outcome is
proportional to the number of ontic states that outcome has in common with the
current epistemic state.  Immediately after the process of measurement, the
epistemic state of the system is updated to the outcome of the measurement.
\item[STM~4:]  Elementary systems compose under the tensor product giving rise
to composite systems; the knowledge balance principle applies to the composite
system as well as to the parts.
\end{description}

To help make the comparison with quantum theory, the corresponding axioms of
quantum mechanics
are given below \cite{MikeIke}.
\begin{description}
\item[QM~1:] Any isolated physical system corresponds to a complex vector space
with an inner product, a \defn{Hilbert space}.  A system is completely described
by a ray in Hilbert space.
\item[QM~2:] Evolution of a closed system is described by a unitary
transformation through the Schr\"{o}dinger equation
\begin{equation}\label{schrodinger}
\hat{H}\ket{\psi}=\imath\hslash\frac{\partial\ket{\psi}}{\partial t}
\end{equation}
whereas $\hat{H}$ is a Hermitian operator.
\item[QM~3:]  Measurement is described by a collection, $\{M_m\}$, of
measurement operators.  These are operators acting on the state space of the
system being
measured.  The index $m$ refers to the measurement outcomes that may occur in
the experiment.  If the state of the quantum system is $\ket{\psi}$ immediately
before the measurement then the probability that result $m$ occurs is given by
 \begin{equation}\label{measurement}
 p(m)=\bra{\psi}\hat{M}_m^{\dagger}\hat{M}_m\ket{\psi},
 \end{equation}
and the state after measurement is given by 
 \begin{equation}
 \frac{\hat{M}_m\ket{\psi}}{\sqrt{\bra{\psi}\hat{M}_m^{\dagger}\hat{M}_m\ket{
\psi}}}.
 \end{equation}
Measurement operators satisfy $\sum_m \hat{M}_m^{\dagger}\hat{M}_m=I$.
\item[QM~4:]  The state space of a composite system is the tensor product of the
 state space of the component systems.
\end{description}

The simplest system that can exist is a single toy bit system: there are two
yes/no questions in a canonical set, yielding
four ontic states, which we label $o_1$, $o_2$, $o_3$, and $o_4$. A pair of
ontic states forms the answer to one of the two questions in a canonical set.
The knowledge balance principle restricts us to
knowing the answer to at most one of two questions, resulting in a \defn{pure
epistemic
state}.  The six pure states are shown pictorially in
Fig.~\ref{fig:pureepistemic}. (In Spekkens' original notation, the state
$e_{ij}$ was denoted $i \vee j$.)
\begin{figure}[htb]
\begin{center}
\includegraphics[keepaspectratio,width=75mm]{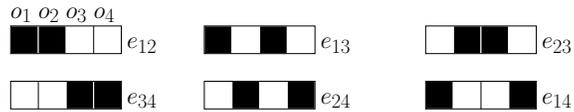}
\caption{\small The six pure epistemic states of the single toy bit model.}
\label{fig:pureepistemic}
\end{center}
\end{figure}

By way of example, the questions ``Is the ontic state in $\{o_1,o_2\}$?" and
``Is the ontic state in $\{o_1,o_3\}$?" form one particular canonical set. The
epistemic state $e_{12} = o_1 + o_2$ corresponds to the situation in which the
first question can be answered, and it is in the affirmative. The model also
includes a single \defn{mixed epistemic state}, namely $e_{1234} = o_1 + o_2 +
o_3 + o_4$, corresponding to knowing absolutely nothing about the system.

At this point we introduce the linear represention for the toy model which will
be
convenient for describing operations later. Let $\{o_1,o_2,o_3,o_4\}$ be a basis
for a real vector space, and express the epistemic states in that basis. Each
pure epistemic state is then a vector with exactly two $1$'s and two $0$'s; for
example,
\[
e_{12} =\left(\begin{array}{c} 
    1 \\1 \\ 0 \\ 0
  \end{array}\right).
\]
Note that epistemic states that are disjoint (that is, have no ontic states in
common) are orthogonal as vectors in $\re^4$.

Now that states in the toy model are defined, we turn our attention to
transformations between states.  STM~2 states that valid operations are
permutations of ontic states. The group of permutations
of four objects is denoted $S_4$, and permutations are usually summarized using
cyclic notation (see \cite[p.~7]{Spekkens04} for details). By way of example,
the permutation $(123)(4)$ maps $o_1$ to $o_2$, $o_2$ to $o_3$, $o_3$ to $o_1$,
and $o_4$ to $o_4$. In terms of epistemic states, $(123)(4)$ maps $e_{12}$ to
$e_{23}$. In the linear representation, each transformation in $S_4$ is a $4
\times 4$
permutation matrix that acts on the left of the epistemic state vectors. For
example,
\begin{equation}
(123)(4)=\left(\begin{array}{cccc} 
    0 & 0 & 1 & 0 \\
    1 & 0 & 0 & 0 \\
    0 & 1 & 0 & 0 \\
    0 & 0 & 0 & 1 
\end{array}\right).
\end{equation}
We call this the \defn{regular representation} of $S_4$, and we will call this
description of STM the \defn{linear model}.

Since the group of operations on a single toy bit is such a well-studied group,
there are other classical systems of states and transformations that may be
readily identified with the single toy bit. One such system uses a regular
tetrahedron. In this geometric representation, the vertices of the tetrahedron
represent the ontic states of the system, whereas pure epistemic states are
represented by edges (see Fig.~\ref{fig:tetrahedron}). 
\begin{figure}[htb]
\begin{center}
\includegraphics[keepaspectratio,width=35mm]{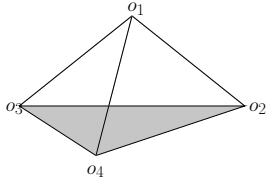}
\caption{\small The regular tetrahedron representation of a toy bit.}
\label{fig:tetrahedron}
\end{center}
\end{figure}
The action of a transformation in $S_4$, then, is a symmetry operation on the
tetrahedron. For example, the transformation $(123)(4)$ permutes vertices $o_1$,
$o_2$, and $o_3$ of the tetrahedron by rotating counter-clockwise by
$2\pi/3$ about the axes which passes through the center of
the tetrahedron and vertex $o_4$.  Since $S_4$ is the entire group of
permutations of $\{o_1,o_2,o_3,o_4\}$, it is
also the complete group of symmetry operations for the regular tetrahedron.
Notice that $A_4$, the alternating group (or group of even permutations),
corresponds to the group of rotations, whereas odd permutations correspond to
reflections and roto-reflections.

As pointed out by Spekkens, another way of viewing the single toy bit is using a
toy analogue of the Bloch sphere. In the toy Bloch sphere, epistemic states are
identified with particular quantum states on the traditional Bloch sphere and
are embedded in $S^2$ accordingly. In particular, $e_{13}$, $e_{23}$, and
$e_{12}$ are identified with $\ket{+}$, $\ket{\imath}$, and $\ket{0}$ and are
embedded on the positive $x$, $y$, and $z$ axes respectively (see
Fig.~\ref{fig:Bloch}). 
\begin{figure}[htb]
\begin{center}
\includegraphics[keepaspectratio,width=60mm]{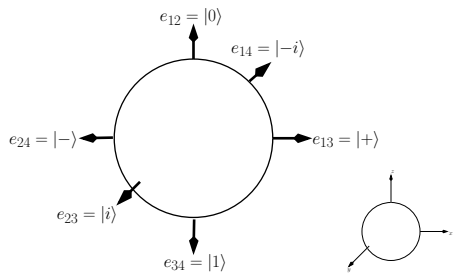}
\caption{\small The Bloch sphere, with both toy and quantum labels.}
\label{fig:Bloch}
\end{center}
\end{figure}
States that are orthogonal in the linear model are
embedded as antipodal points on the toy Bloch sphere, just as orthogonal quantum
states are embedded antipodally on the quantum Bloch sphere. Distance on the toy
Bloch sphere corresponds to overlap between states: two epistemic states have an
angle of $\pi/2$ between them if and only if they have exactly one ontic state
in common.

On the quantum Bloch sphere, single qubit transformations are represented by
rotations in the group $SO(3)$, and they may be characterized using Euler
rotations. More precisely, if $R_x(\theta)$ denotes a rotation about the
$x$-axis by $\theta$, then any $T \in SO(3)$ may be written in the form 
\begin{equation}
T = R_x(\theta)R_z(\phi)R_x(\psi), \qquad  0 \leq \theta \leq \pi, \;\; -\pi <
\phi,\psi \leq \pi.
\end{equation}
For example, the rotation by $2\pi/3$ about the $x+y+z$ axis may be written as
$R_{x+y+z}(\frac{2\pi}{3}) =
R_x(\pi)R_z\left(\frac{-\pi}{2}\right)R_x\left(\frac{-\pi}{2}\right)$ (see
Fig.~\ref{fig:euler}).
\begin{figure}[htb]
\includegraphics[keepaspectratio,width=150mm]{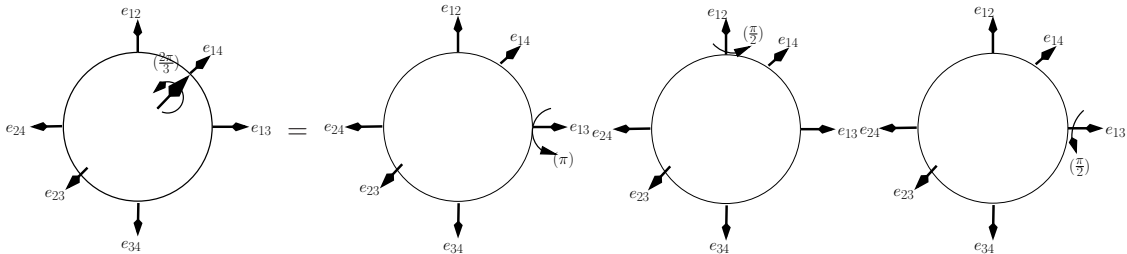}
\caption{\small The element $R_{x+y+z}(2\pi/3)$ expressed as a series of Euler
rotations.}
\label{fig:euler}
\end{figure}

On the toy Bloch sphere, in contrast, transformations are elements of $O(3)$,
not all of which are rotations. For example, the permutation $(12)(3)(4)$ is not
a rotation of the toy Bloch sphere but a reflection through the plane
perpendicular to the $x-y$ axis (see Fig.~\ref{fig:conj}). Thus, there are
operations in the single toy bit model that have no quantum analogue. (We will
see shortly that such toy operations correspond to \defn{anti-unitary} quantum
operations.)
\begin{figure}[htb]
\begin{center}
\includegraphics[keepaspectratio,width=35mm]{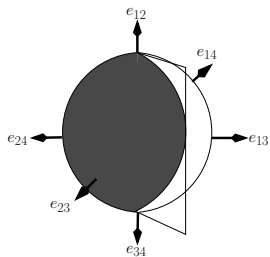}
\caption{\small The element $(12)(3)(4)$ acts as a reflection on
the toy sphere.}
\label{fig:conj}
\end{center}
\end{figure}

The toy operations that do correspond to rotations on the Bloch sphere are
precisely the operations in $A_4$, the group of even permutations. In terms of
the linear model, these are the transformations of $S_4$ with determinant $1$.
Toy operations not in $A_4$ may be expressed as a rotation composed with a
single reflection. When $T$ is a rotation on the toy Bloch sphere, its Euler
rotations $R_x(\theta)R_z(\phi)R_x(-\psi)$ satisfy $\theta \in \{0,\pi/2,\pi\}$
and $\phi,\psi \in \{-\pi/2,0,\pi/2,\pi\}$. For example, the permutation
$(123)(4)$ corresponds to the rotation $R_{x+y+z}(2\pi/3)$ seen in
Fig.~\ref{fig:euler}. 

STM~3 addresses the problem of measurement in the toy theory.
For a single toy bit, a measurement is any one question from a canonical set;
thus there are a
total of six measurements that may be performed. After a
measurement is performed and a result is obtained, the observer has acquired new
information about the system and updates his state of knowledge to the result
of the measurement.
This ensures that a repeat of the question produces the same outcome. Note that
the outcome of a measurement is governed by the ontic state of the system and
not the measurement itself. 

The question ``Is the ontic state in $\{o_m,o_n\}$?'' can be represented by a
vector $r_{mn} = o_m + o_n$. The probability of getting ``yes" as the outcome is
then
\begin{equation}\label{toymeasurement}
p_{mn}= \frac{r_{mn}^T e_{ij}}{2},
\end{equation}
where $e_{ij}$ is the current epistemic state of the system. After this outcome,
the epistemic state is updated to be $e_{mn}$. The vectors $r_{mn}$ and
probabilities $p_{mn}$ are analogous to the measurement operators and outcome
probabilities in QM~3.

STM~4 concerns the composition of one or more toy bits.
For the case of two toy bits there are four questions in a canonical
set, two per bit, giving rise to 16 ontic states, which we denote
$o_{ij},\ i,j=1\ldots 4$.  In the linear model this is simply the
tensor product of the $4$-dimensional vector space with itself, and the ontic
state $o_{ij}$ is understood to be
$o_i\otimes o_j$. The types of epistemic states arising in this case
are of three types; maximal, non maximal, and zero knowledge,
corresponding to knowing the answers to two, one, or zero questions
respectively.  It suffices for our purposes to consider only states of maximal
knowledge (pure states).  These, in Spekkens'
representation, are of two types (see Fig.~\ref{fig:twostates}),
\begin{figure}[htb]
\begin{center}
\includegraphics[keepaspectratio, width=55mm]{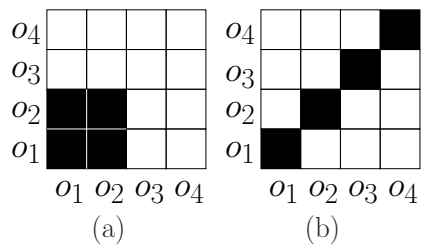}
\caption{\small (a) Uncorrelated and (b) correlated states in the toy
model.}
\label{fig:twostates}
\end{center}
\end{figure}
called uncorrelated and correlated states respectively. An \defn{uncorrelated
state} is the tensor product of two pure single toy bit states. If each of the
single toy bits satisfy the knowledge balance principle, then
their composition will also satisfy the knowledge balance principle for the
composite system.  A \defn{correlated state} is one in which nothing is known
about the ontic state of each elementary system, but
everything is known about the classical correlations between the ontic states of
the two elementary toy systems. If the two single bit systems in
Fig.~\ref{fig:twostates}(b) are labelled A and B, then nothing is known about
the true state of either A or B, but we know that if A is in the state $o_i$,
then B is also in the state $o_i$. 

According to STM~2, operations on two toy bits are permutations of ontic states
that map epistemic states to epistemic states.  These permutations are of two
types: tensor products of permutations on the individual systems, and 
indecomposable permutations (see Fig.~\ref{fig:perm}).
\begin{figure}[htb]
\begin{center}
\includegraphics[keepaspectratio,width=55mm]{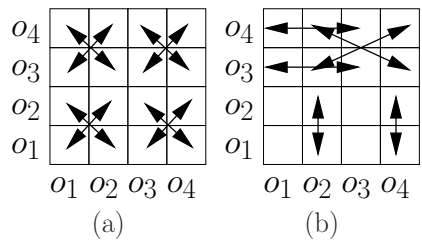}
\caption{\small Operations on two toy bits: (a) a tensor product operation and
(b) an indecomposable permutation.}
\label{fig:perm}
\end{center}
\end{figure}
Moreover, STM~4 suggests that if an operation is valid on a given system, then
it should still be valid when an ancilla is added to that system. That is, if
$T$ is a valid operation on a single toy bit, then $T \otimes I$ ought to be
valid
on two toy bits.  It follows that valid operations should compose under the
tensor product. 

Finally, STM~3 implies that a measurement of the two toy bit space is a
partition of ontic states into disjoint epistemic states: each epistemic state
consists of $4$ or $8$ ontic states. There are in total $105$ partitions of the
two toy bit space into epistemic states of size $4$.

In the linear model, epistemic states, operations, and measurements extrapolate
in the manner anticipated. A pure epistemic state is a $\{0,1\}$-vector
of length $16$ containing exactly $4$ ones, whereas an operation is a
$16\times 16$ permutation matrix.  The group of operations can be
computationally verified to be of order $11520$. Measurement is a row vector
$r_{{o_{ijkl}}}\in \{0,1\}^{16}$ with the state after measurement updated
according to the outcome obtained.  In the linear model STM~4 is understood as
the composition of valid states and operations under the tensor product.  

Finally, a two toy bit system can be geometrically realized by the
four-dimensional cube (see Fig.~\ref{tesseract}).  This is a new
representation for the two toy bit system that in some ways
generalizes Spekkens' tetrahedral description of the single toy bit.
By mapping the ontic states $o_1\ldots o_4$ of an elementary system to
the vertices $(x,y),\ x,y\in\{-1,1\}$ of a square, the four-dimensional cube is
the result of the tensor product of two elementary systems.  Every
epistemic state is an affine plane containing four vertices, and the
group of permutations of two toy bits is a subgroup of $B_4[3,3,4]$,
the symmetry group of the four-dimensional cube (for more details, see
\cite{Coxeter}).
\begin{figure}[htb] 
\begin{center}
\includegraphics{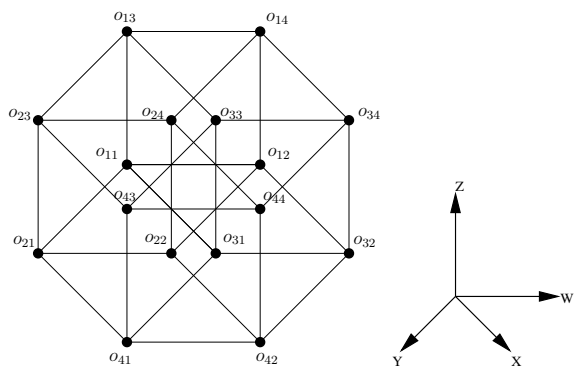}
\caption{\small The four-dimensional hypercube representation for the space
of two toy bits.}
\label{tesseract}
\end{center}
\end{figure}

In this section we reviewed STM, identifying its axioms and drawing a
correspondence with the axioms of quantum theory. In the next section, we 
investigate a relaxation of STM~2.

\section{\label{sec:relax}Relaxing the Spekkens toy model.}

In this section we relax STM~2, the axiom describing valid reversible
operations. We obtain a new group of operations which contains a subgroup
isomorphic to the projective Clifford Group for two qubits, a characteristic of
quantum theory not captured by STM.  However,
the operations in these new group fail to compose under the tensor product,
rendering the relaxation of STM~2 physically unreasonable. Nevertheless, we
claim that operations that fail to compose under the tensor product can be used
as tests for correlations in STM.

Recall that STM~2 describes valid operations on toy states.  In particular,
STM~2 requires that valid operations act on the ontic states in a reversible
manner
(ontic determinism).  Now consider an empiricist living in a universe
governed by the axioms of STM---a toy universe.  Such an empiricist has access
only to epistemic states.  As a result an empiricist sees determinism only
at the epistemic scale (epistemic determinism); the knowledge balance
principle
forbids exact knowledge of the ontic state of the system. For an empiricist,
ontic determinism is
too strict a condition. We thus propose the following amendment.
\begin{description}
\item[STM~2$'$:] A \defn{valid reversible operation} is a linear transformation
that permutes the epistemic states of the system.
\end{description}
.
The requirement that transformations be linear implies that as
$e_{1234} = e_{12}+e_{34}$, then $T(e_{1234}) = T(e_{12})+T(e_{34})$ for any
valid $T$: in other words, mixtures of epistemic states are transformed into
other mixtures.  It follows that pairs of disjoint epistemic states are mapped
to other pairs of disjoint states, and the amount of overlap between epistemic
states is preserved. This linearity condition is essential if the toy theory is
to emulate
significant aspects of quantum theory.  Investigations into a non-linear theory
of quantum mechanics~\cite{IBB, Weinberg1, Weinberg2} have been experimentally
tested and found to be ``measurably not different from the linear
formalism''~\cite{Itano}. Furthermore it was shown by Peres that a non-linear
quantum
mechanical theory would violate the second law of thermodynamics~\cite{Peres}.

We let $TG(1)$ denote the group of operations obtained by replacing STM~2 with
STM~2$'$. In terms of
the linear model, an operation is in $TG(1)$ if it can be represented as a $4
\times 4$ orthogonal matrix that maps epistemic states to epistemic states.
This includes all the operations in $S_4$, but it also includes operations such
as
\begin{equation} \label{newoperations}
\widetilde{\sqrt{Z}}=\frac{1}{2}\left(\begin{array}{c c c c} 
    1 & 1 & -1 & 1\\
    1 & 1 & 1 & -1\\
    1 & -1 & 1 & 1\\
    -1 & 1 & 1 & 1\end{array}\right), \quad
\widetilde{H}=\frac{1}{2}\left(\begin{array}{c c c c}
    1 & 1 & 1 & -1\\
    1 & -1 & 1 & 1\\
    1 & 1 & -1 & 1\\
    -1& 1 & 1 & 1
\end{array}\right).
\end{equation}

On the toy Bloch sphere, $TG(1)$ is the subgroup of operations in $O(3)$ that
preserve the set of six pure epistemic states.  On the toy Bloch sphere,
Eq.~\eqref{newoperations}, are the Euler rotations
\begin{equation}\label{oprot}
\widetilde{\sqrt{Z}}=R_z\left(-\frac{\pi}{2}\right),\quad
\widetilde{H}=R_x\left(\frac{\pi}{2}\right)R_z\left(\frac{\pi}{2}
\right)R_x\left(\frac{\pi}{2}
\right),
\end{equation}
respectively. We have called these operations $\widetilde{\sqrt{Z}}$ and
$\widetilde{H}$ because their action on the toy Bloch sphere resembles the
quantum operations $\sqrt{Z}$ and $H$ respectively.

The order of $TG(1)$ is 48, as the next lemma shows.
\begin{lemma}
$TG(1)$ is the set of all permutations of $\{e_{13},e_{24},e_{23},e_{14},e_{12},
e_{34}\}$ such that pairs of antipodal states are mapped to pairs of antipodal
states.
\label{lem:tg1}
\end{lemma}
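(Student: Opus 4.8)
The plan is to reduce the statement to an action on the six pure epistemic states and then finish by counting. First I would observe that every $T\in TG(1)$ fixes the mixed state $e_{1234}$: since $T$ maps epistemic states to epistemic states and $e_{1234}=e_{12}+e_{34}$, linearity gives $T(e_{1234})=T(e_{12})+T(e_{34})$, a vector whose entries sum to four; since the only epistemic state with entry-sum four is $e_{1234}$ itself (every pure state has entry-sum two), we get $T(e_{1234})=e_{1234}$. Consequently $T$ restricts to a permutation of the six pure states $\{e_{13},e_{24},e_{23},e_{14},e_{12},e_{34}\}$, which yields a group homomorphism $\rho\colon TG(1)\to S_6$. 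This $\rho$ is injective: the six pure epistemic states span $\re^4$ (for instance $e_{12},e_{34},e_{13},e_{14}$ are linearly independent), so a linear operator on $\re^4$ is determined by its action on them.

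Next I would identify the image of $\rho$. Two pure epistemic states $e_{ij}$ and $e_{kl}$ are disjoint precisely when $e_{ij}+e_{kl}=e_{1234}$, and by the construction of the toy Bloch sphere in Section~\ref{sec:STM} being disjoint is the same as being antipodal. Since $T$ is linear and fixes $e_{1234}$, it carries disjoint pairs to disjoint pairs, so $\rho(T)$ preserves the three antipodal pairs $\{e_{13},e_{24}\}$, $\{e_{23},e_{14}\}$, $\{e_{12},e_{34}\}$. Hence $\rho(TG(1))$ is contained in the group $H$ of all permutations of the six states preserving this perfect matching, and a direct count gives $|H|=3!\cdot 2^3=48$: such a permutation is specified by a permutation of the three pairs together with, independently, a choice of ordering for each of the three image pairs.

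It then remains to prove $\rho(TG(1))=H$, which follows from a counting argument. The group $TG(1)$ contains $S_4$ (permutation matrices are orthogonal and permute the pure epistemic states), so $\rho(TG(1))$ contains the order-$24$ subgroup $\rho(S_4)$; moreover $\widetilde{\sqrt{Z}}$ from Eq.~\eqref{newoperations} lies in $TG(1)$ but is not a permutation matrix, so $\rho(S_4)$ is a \emph{proper} subgroup of $\rho(TG(1))$. Since $|\rho(S_4)|=24$, $|H|=48$, and $\rho(S_4)\subsetneq\rho(TG(1))\subseteq H$, we conclude $\rho(TG(1))=H$. As $\rho$ is injective, $TG(1)$ is isomorphic to $H$ and, through the linear model, is exactly the set of permutations of $\{e_{13},e_{24},e_{23},e_{14},e_{12},e_{34}\}$ sending antipodal pairs to antipodal pairs; in particular $|TG(1)|=48$.

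The linear-algebra steps (that $T$ fixes $e_{1234}$, that the pure states span $\re^4$, and the resulting injectivity of $\rho$) are routine; the point needing care is the final count, which closes only because $\rho(S_4)$ has index exactly $2$ in $H$ \emph{and} $TG(1)$ genuinely contains an element outside $S_4$. It is therefore worth checking directly from Eq.~\eqref{newoperations} that $\widetilde{\sqrt{Z}}$ is orthogonal, permutes the six epistemic state vectors among themselves, and is not a permutation matrix. An alternative to the counting step would be to exhibit explicit generators of $H$ inside $TG(1)$ (say $\widetilde{\sqrt{Z}}$, $\widetilde{H}$, and one transposition from $S_4$) and verify that they generate a group of order $48$, but that is a computation rather than an argument.
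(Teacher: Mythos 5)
Your proof is correct and follows essentially the same route as the paper's: show every element of $TG(1)$ acts as an antipodal-pair-preserving permutation of the six pure states, count that there are only $3!\cdot 2^3 = 48$ such permutations, and use the fact that $TG(1)$ properly contains the order-$24$ subgroup $S_4$ to force equality. You simply make explicit several steps the paper leaves implicit (that $e_{1234}$ is fixed, that the restriction to the six pure states is injective, and that $\widetilde{\sqrt{Z}}$ witnesses the properness of the containment $S_4 \subsetneq TG(1)$).
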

\begin{proof}
Since $TG(1)$ contains $S_4$ as a proper subgroup, $TG(1)$ has order at least
$48$. Moreover, every element of $TG(1)$ is a permutation of epistemic states
mapping pairs of antipodal points to pairs of antipodal points. We prove the
lemma by counting those permutations; as only $48$ such operations exist,
they must all be in $TG(1)$.

There are three pairs of antipodal states on the toy sphere, namely
$\{e_{13},e_{24}\}$, $\{e_{23},e_{14}\}$, and $\{e_{12}, e_{34}\}$. Therefore a
map that preserves pairs of antipodal points must permute these three pairs:
there are $3! = 6$ such permutations. Once a pair is chosen, there are two ways
to permute the states within a pair. Therefore, there are a total of $3!\cdot
2^3 = 48$ distinct permutations that map pairs of antipodal states to pairs of
antipodal states.
\end{proof}

By the argument in Lemma \ref{lem:tg1}, $TG(1)$ may be formally identified with
the semidirect product
$(\Z_2)^3\rtimes S_3$, where $g\in S_3$ acts on $\Z_2^3$ by
\begin{equation}\label{Operation}
  g:(x_1,x_2,x_3)\mapsto (x_{g(1)},x_{g(2)},x_{g(3)}),\quad (x_1,x_2,x_3)\in
  \Z_2^3.
\end{equation}
An element of $S_3$ permutes the three pairs of antipodal states, whereas an
element of $\Z_2^3$ determines whether or not to permute the states within each
antipodal pair. The following result explains how Spekkens' original group of
operations fits into $TG(1)$.
\begin{lemma}
$S_4$ is the subgroup of $\Z_2^3\rtimes S_3$ consisting of elements
$((x,y,z),g)$ such that $(x,y,z) \in \Z_2^3$ has Hamming weight of
zero or two.
\end{lemma}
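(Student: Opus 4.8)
The plan is to realise $S_4$ as the kernel of a single homomorphism from $TG(1)\cong\Z_2^3\rtimes S_3$ onto $\Z_2$. Define
\[
\chi\colon\Z_2^3\rtimes S_3\longrightarrow\Z_2,\qquad \chi\bigl((x_1,x_2,x_3),g\bigr)=x_1+x_2+x_3 \pmod 2 .
\]
Since the action \eqref{Operation} of $S_3$ on $\Z_2^3$ merely permutes coordinates, it preserves Hamming weight, and substituting the product rule $((x),g)\cdot((x'),g')=\bigl((x)+g\cdot(x'),\,gg'\bigr)$ into $\chi$ shows at once that $\chi$ is a group homomorphism. Its kernel $K$ consists precisely of the elements $((x,y,z),g)$ for which $(x,y,z)$ has even Hamming weight, and in $\Z_2^3$ the even weights are exactly $0$ and $2$; so $K$ is the subset named in the statement. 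As $\chi$ is surjective, $K$ has index $2$ in a group of order $48$, hence $|K|=24=|S_4|$.

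It therefore suffices to prove the inclusion $S_4\subseteq K$; equality then follows by comparing the (finite) orders. For the inclusion I would argue softly, using that $A_4$ is the unique subgroup of index $2$ in $S_4$: every homomorphism $S_4\to\Z_2$ is thus either trivial or the sign character $\op{sgn}$, so $\chi|_{S_4}$ is determined by its value on a single transposition. If $\chi$ vanishes on one transposition then $\chi|_{S_4}$ is trivial and $S_4\subseteq K$; otherwise $\chi|_{S_4}=\op{sgn}$ is nonzero on every transposition.

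So the one calculation needed is $\chi$ on a transposition, say $(12)(3)(4)$. Its action on the six pure epistemic states is read off directly: $e_{13}\leftrightarrow e_{23}$ and $e_{24}\leftrightarrow e_{14}$, while $e_{12}$ and $e_{34}$ are fixed. Thus $(12)(3)(4)$ interchanges the antipodal pairs $\{e_{13},e_{24}\}$ and $\{e_{23},e_{14}\}$, fixes $\{e_{12},e_{34}\}$ pointwise, and carries each antipodal pair onto its image ``in register'' with the labelling inherited from the toy Bloch sphere; hence its image in $\Z_2^3\rtimes S_3$ is $\bigl((0,0,0),(12)\bigr)$ and $\chi\bigl((12)(3)(4)\bigr)=0$. (This is insensitive to the arbitrary choice of which state in each antipodal pair carries the bit $1$: changing that choice for one pair flips its bit together with the bit of the pair mapped onto it, leaving the parity $x_1+x_2+x_3$ unchanged.) Hence $\chi|_{S_4}$ is trivial, $S_4\subseteq K$, and the order count gives $S_4=K$.

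The argument is almost entirely bookkeeping, and the bookkeeping is the only delicate point: one must fix the identification $TG(1)\cong\Z_2^3\rtimes S_3$ --- in particular the convention assigning the bit $1$ to one state of each antipodal pair --- carefully enough that ``Hamming weight $0$ or $2$'' is unambiguous, and the parity-invariance remark above is what shows the statement does not in the end depend on that convention.
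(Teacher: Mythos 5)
Your proof is correct, and it takes a genuinely different route from the paper's. The paper argues bottom-up: it computes the images of the three generating transpositions $(12)$, $(23)$, $(34)$ of $S_4$ --- namely $((0,0,0),(z)(xy))$, $((0,0,0),(zx)(y))$, and $((1,1,0),(z)(xy))$ --- and then argues that these generate exactly the elements whose $\Z_2^3$-component has Hamming weight $0$ or $2$ (a step that implicitly requires checking that the $S_3$-orbit of $(1,1,0)$ together with sums exhausts the even-weight subgroup of $\Z_2^3$, which the paper leaves terse). You argue top-down: you exhibit the target set as the kernel of the parity character $\chi$ on $\Z_2^3\rtimes S_3$, which gives you for free that it is a subgroup of order $24$, reduce the containment $S_4\subseteq\ker\chi$ to a single transposition via the fact that $S_4$ admits only the trivial and sign characters to $\Z_2$, and close with an order count. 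Your version requires only one explicit image computation instead of three, and your observation that the parity of the Hamming weight is independent of the choice of ``positive'' representative in each antipodal pair is a worthwhile robustness check that the paper does not make; the paper's version, in exchange, is more constructive and exhibits the generators explicitly, which it reuses implicitly elsewhere. Both arguments rest on the faithful order-$24$ embedding of $S_4$ into $TG(1)$ already established in Lemma~\ref{lem:tg1}, and your computed image $((0,0,0),(xy))$ of $(12)(3)(4)$ agrees with the paper's $((0,0,0),(z)(xy))$.
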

\begin{proof}
Label the antipodal pairs $\{e_{13},e_{24}\}$, $\{e_{23},e_{14}\}$, and
$\{e_{12}, e_{34}\}$ with their Bloch sphere axes of $x$, $y$, and $z$. Now
$S_4$ is generated by the elements $(12)(3)(4)$, $(23)(1)(4)$, and $(34)(1)(2)$,
and by considering the action on the Bloch sphere, we see that these elements
correspond to $((0,0,0),(z)(xy))$, $((0,0,0),(zx)(y))$ and $((1,1,0),(z)(xy))$
in $\Z_2^3\rtimes S_3$ respectively.  Note that $((0,0,0),(z)(xy))$ and
$((0,0,0),(zx)(y))$ generate all elements of the form $((0,0,0),g)$ with $g \in
S_3$, so adding $((1,1,0),(z)(xy))$ generates all elements of the form
$((x,y,z),g)$ whereas $(x,y,z)$ has Hamming weight zero or two.
\end{proof}

$TG(1)$ exhibits a relationship with the operations in quantum mechanics
acting on a single qubit restricted to the six states shown in
Fig.~\ref{fig:Bloch}. To describe the connection, we must first describe the
extended
Clifford Group. 

Recall that the \defn{Pauli Group} for a single qubit, denoted $\mathcal{P}(1)$,
is the group of matrices generated by $X=(\begin{smallmatrix} 0 & 1 \\ 1 &
0\end{smallmatrix})$ and $Z=(\begin{smallmatrix} 1 & 0\\ 0 &
-1\end{smallmatrix})$. The \defn{Clifford Group}, denoted $\mathcal{C}(1)$, is
the normalizer of the Pauli Group in $U(2)$, and is generated by the matrices
(see \cite{Marten})
\begin{equation}
H=\frac{1}{\sqrt{2}}\left(\begin{matrix}
    1 & 1 \\
    1 & -1\end{matrix}\right), \; \sqrt{Z}=\left(\begin{matrix}
    1 & 0\\
    0 & i\end{matrix}\right), \; \left\{ e^{\imath\theta}I \mid 0 \leq \theta <
2\pi\right\}.
\end{equation}
Since $U$ and $e^{\imath\theta}U$ are equivalent as quantum operations, we focus
on the projective group of Clifford operations, namely $\mathcal{C}(1)/U(1)
\cong \mathcal{C}(1)/\modu$. This is a finite group of $24$ elements. For our
purposes, the significance of the Clifford
Group is that it is the largest group in $U(2)$ that acts invariantly on the
set of the six quantum states
$\{\ket{0},\ket{1},\ket{+},\ket{-},\ket{\imath},\ket{-\imath}\} \subset \cx^2$
(with $\ket{\psi}$ and $e^{\imath\theta}\ket{\psi}$ considered equivalent).

An \defn{anti-linear} map on $\cx^2$ is a transformation $T$ that satisfies the
following condition for all $u,v \in \cx^2$ and $\alpha \in \cx$:
\begin{equation}
T(\alpha u + v) = \bar{\alpha}T(u) + T(v).
\end{equation}
Every anti-linear map may be written as a linear map composed with the
complex conjugation operation, namely
\begin{equation}
\conj: \alpha \ket{0}+\beta \ket{1} \mapsto
\bar{\alpha}\ket{0}+\bar{\beta}\ket{1}.
\end{equation}
An \defn{anti-unitary} map is an anti-linear map that may be written as a
unitary map composed with conjugation. The unitary maps $U(2)$ and their
anti-unitary counterparts together form a group, which we denote $EU(2)$.
Finally, the \defn{extended Clifford Group} $\mathcal{EC}(1)$ is the normalizer
of the Pauli Group in $EU(2)$. Working projectively, $\mathcal{EC}(1)/U(1)$ is a
finite group of $48$ elements, generated by $\sqrt{Z}\modu$, $H\modu$, and
$\conj\modu$. For more details about the extended Clifford Group, see for
example~\cite{Appleby}.

The following proposition demonstrates the relationship between $TG(1)$ and
$\mathcal{EC}(1)/U(1)$.
\begin{proposition} 
The toy group $TG(1)$ is isomorphic to the projective extended Clifford Group
$\mathcal{EC}(1)/U(1)$.
\end{proposition}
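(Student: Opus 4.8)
The plan is to establish the isomorphism by matching the two groups against the same combinatorial skeleton, namely the action on the six distinguished states and the pairing into antipodal (orthogonal) pairs. By Lemma~\ref{lem:tg1}, $TG(1)$ is exactly the group of permutations of the six pure epistemic states $\{e_{13},e_{24},e_{23},e_{14},e_{12},e_{34}\}$ that preserve the three antipodal pairs, so $TG(1)\cong\Z_2^3\rtimes S_3$ with the action in Eq.~\eqref{Operation}, a group of order $48$. It therefore suffices to show that $\mathcal{EC}(1)/U(1)$ is \emph{also} the full group of permutations of the six states $\{\ket{0},\ket{1},\ket{+},\ket{-},\ket{\imath},\ket{-\imath}\}$ (taken projectively) that preserve the three orthogonal pairs $\{\ket{0},\ket{1}\}$, $\{\ket{+},\ket{-}\}$, $\{\ket{\imath},\ket{-\imath}\}$, and then identify the toy pairs $\{e_{13},e_{24}\}$, $\{e_{23},e_{14}\}$, $\{e_{12},e_{34}\}$ with the quantum pairs $\{\ket{+},\ket{-}\}$, $\{\ket{\imath},\ket{-\imath}\}$, $\{\ket{0},\ket{1}\}$ respectively, exactly as in the toy Bloch sphere identification fixed after Fig.~\ref{fig:Bloch}.

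First I would argue the upper bound: every element of $\mathcal{EC}(1)/U(1)$ acts as such a permutation. Since $\mathcal{EC}(1)$ normalizes the Pauli group, conjugation permutes $\{X,Y,Z\}$ up to sign; the six states are precisely the $\pm1$ eigenstates of $X,Y,Z$, so an extended Clifford operation sends eigenstates of one Pauli to eigenstates of another, hence permutes the six states and preserves the three orthogonal pairs. This gives a homomorphism $\mathcal{EC}(1)/U(1)\to\Z_2^3\rtimes S_3$. It is injective because a projective unitary or anti-unitary on $\cx^2$ that fixes $\ket{0},\ket{1},\ket{+}$ (three non-orthogonal states, two of them a basis) is forced to be the identity in $PU(2)$ — a short direct computation. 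Since $\abs{\mathcal{EC}(1)/U(1)}=48=\abs{\Z_2^3\rtimes S_3}$ (the order $48$ is stated in the excerpt), the injection is a bijection, so $\mathcal{EC}(1)/U(1)\cong\Z_2^3\rtimes S_3\cong TG(1)$.

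Alternatively, and perhaps more cleanly for exposition, I would build the isomorphism explicitly on generators. $TG(1)$ is generated by $\widetilde{\sqrt{Z}}$, $\widetilde{H}$ together with $S_4$ (or more economically by $\widetilde{\sqrt{Z}}$, $\widetilde{H}$, and $\tconj:=(12)(3)(4)$), and $\mathcal{EC}(1)/U(1)$ is generated by $\sqrt{Z}\modu$, $H\modu$, $\conj\modu$. Using Eq.~\eqref{oprot} and the toy Bloch sphere identification, $\widetilde{\sqrt{Z}}$ acts on the six states exactly as $\sqrt{Z}$ does (it realizes $R_z(-\pi/2)$), $\widetilde{H}$ acts as $H$ (both realize $R_x(\pi/2)R_z(\pi/2)R_x(\pi/2)$), and the reflection $(12)(3)(4)$ of Fig.~\ref{fig:conj} acts as $\conj$ (complex conjugation is the reflection fixing $\ket{0},\ket{1},\ket{+},\ket{-}$ and swapping $\ket{\imath}\leftrightarrow\ket{-\imath}$, i.e.\ a reflection through the plane perpendicular to the $y$-axis; after relabeling axes it matches the stated reflection). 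So the assignment $\widetilde{\sqrt{Z}}\mapsto\sqrt{Z}\modu$, $\widetilde{H}\mapsto H\modu$, $\tconj\mapsto\conj\modu$ extends to a well-defined surjective homomorphism, which is an isomorphism by the order count.

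The main obstacle is the faithfulness/well-definedness bookkeeping on the quantum side: one must verify that the map on the six states determines the projective extended Clifford element uniquely (no extra kernel), and that the relations holding among $\widetilde{\sqrt Z},\widetilde H,\tconj$ in $TG(1)$ are respected by $\sqrt Z,H,\conj$ in $\mathcal{EC}(1)/U(1)$ — equivalently, that the two $48$-element groups have the same presentation on these generators. The cardinality coincidence ($48=48$) does most of the work, reducing everything to exhibiting one surjection (or one injection) between the two groups; the residual task is the elementary linear-algebra check that fixing $\ket0,\ket1,\ket+$ pins down an element of $\mathcal{EC}(1)/U(1)$, which I would dispatch in a sentence or two.
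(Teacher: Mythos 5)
Your main argument follows the same route as the paper's own proof: use Lemma~\ref{lem:tg1} to identify $TG(1)$ with the full group of antipodal-pair-preserving permutations of the six states, observe that every element of $\mathcal{EC}(1)/U(1)$ induces such a permutation (because it normalizes the Pauli group and is unitary or anti-unitary), and close with the order count $48=48$ together with faithfulness of the action on the six states. The one genuine flaw is in your injectivity step: it is \emph{not} true that a projective unitary or anti-unitary fixing $\ket{0}$, $\ket{1}$, $\ket{+}$ must be the identity --- complex conjugation $\conj$ fixes $\ket{0},\ket{1},\ket{+},\ket{-}$ projectively yet is a nontrivial element of $\mathcal{EC}(1)/U(1)$. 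What you actually need (and what is true) is that an element fixing all \emph{six} states is trivial: the additional fixed state $\ket{\imath}$ excludes the anti-unitary case, and a unitary fixing $\ket{0},\ket{1},\ket{+}$ projectively is a scalar. With that one-line repair your first argument is complete and coincides with the paper's (which merely asserts faithfulness without proof).

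Your alternative generator-matching argument is a reasonable way to make the isomorphism explicit, but as written it does not by itself yield a well-defined homomorphism until the relations among $\widetilde{\sqrt{Z}}$, $\widetilde{H}$, and $(12)(3)(4)$ are verified to hold for $\sqrt{Z}\modu$, $H\modu$, $\conj\modu$; in practice it is a corollary of the first argument (both groups act faithfully and identically on the six states) rather than an independent proof.
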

\begin{proof}
By Lemma \ref{lem:tg1}, $TG(1)$ consists of all possible ways of permuting
$\{e_{13},e_{24},e_{23},e_{14},e_{12}, e_{34}\}$ such that antipodal points are
mapped to antipodal points. Now consider the quantum analogues of these states,
namely $\ket{+},
\ket{-}, \ket{\imath}, \ket{-\imath}, \ket{0}$, and $\ket{1}$ respectively. For
each $T\modu$ in $\mathcal{EC}(1)/U(1)$, $T$ is a normalizer of the Pauli Group,
so $T\modu$ acts invariantly on the six quantum states as a set. Since $T$ is
also unitary or anti-unitary, it preserves distance on the Bloch sphere and
therefore maps antipodal points to antipodal points. By the argument in Lemma
\ref{lem:tg1}, there are only $48$ such operations, and it is easy to verify
that no two elements of $\mathcal{EC}(1)/U(1)$ act identically. It follows that
$\mathcal{EC}(1)/U(1)$ and $TG(1)$ are isomorphic, as both are the the group of
operations on six points of the Bloch sphere that map pairs of antipodal points
to pairs of antipodal points.
\end{proof}

We now look at the composition of two elementary systems. In the linear model of
two toy bits, every valid operation is an orthogonal matrix.  As
STM~2$'$ requires that valid operations map epistemic states to
epistemic states reversibly, it can be shown that operations such as
$I\otimes\widetilde{H}P$, with $P \in S_4$, fail to map correlated states
to valid epistemic states and therefore are not valid
operations. On the other hand, operations such as
$P\widetilde{H}\otimes Q\widetilde{H}$, with $P,Q \in S_4$, are valid under
STM~2$'$.  Let $TG(2)$ denote the group of valid operations for two toy bits.
The order of $TG(2)$ can be verified computationally to be $23040$, and
Spekkens' group of operations is a subgroup of $TG(2)$. 

We discover that $TG(2)$ is very simply
related to the \defn{extended Clifford Group} for two qubits,
$\mathcal{EC}(2)$.  Let $\mathcal{P}(2)$ be the Pauli Group for two
qubits; then the extended Clifford Group for two qubits,
$\mathcal{EC}(2)$, is the group of all unitary and anti-unitary
operators $U$ such that
\begin{equation}
U\mathcal{P}(2)U^{\dagger} =\mathcal{P}(2).
\end{equation} 
It is generated by 
\begin{equation}
\sqrt{Z}\otimes I, \; I\otimes \sqrt{Z}, \; H \otimes I, \; I \otimes H, \;
\cnot =
\left(\begin{matrix}
    1 & 0 & 0 & 0 \\
    0 & 1 & 0 & 0 \\
    0 & 0 & 0 & 1 \\
    0 & 0 & 1 & 0 \\
\end{matrix}\right),
\end{equation}
the conjugation operation, and unitary multiples of the identity matrix. Working
projectively, it can be
shown that $\mathcal{EC}(2)/U(1)$ is a group of order $23040$
(see \cite{Appleby}). The two-qubit Clifford Group $\mathcal{C}(2)$ is a
subgroup of $\mathcal{EC}(2)$, and $\mathcal{C}(2)$ is the largest group in
$U(4)$ that acts invariantly on a set of sixty states; this is the same size as
the set of epistemic states for two toy bits.  The following isomorphism was
verified using the computation program GAP \cite{GAP}.

\begin{proposition}
  $TG(2)$ is isomorphic to $\mathcal{EC}(2)/U(1)$, the
  two qubit extended Clifford Group modulo phases.
\end{proposition}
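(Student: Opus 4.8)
The plan is to follow the proof of the single--toy--bit proposition, promoting the structure ``six Bloch points in three antipodal pairs'' to the discrete symplectic geometry over $\Z_2$ carried by the sixty pure epistemic states of two toy bits. Concretely, I would (i) fix a bijection $\Phi$ between these sixty epistemic states and the sixty two--qubit stabilizer states on which $\mathcal{C}(2)$, hence $\mathcal{EC}(2)$, acts invariantly; (ii) realize each generator of $\mathcal{EC}(2)/U(1)$ as an explicit element of $TG(2)$; (iii) check that $\Phi$ intertwines the resulting two actions on sixty points; and (iv) finish by an order count, since $|TG(2)| = |\mathcal{EC}(2)/U(1)| = 23040$.

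For step (ii), send $\sqrt{Z}\otimes I$ and $I\otimes\sqrt{Z}$ to the orthogonal matrices $\widetilde{\sqrt{Z}}\otimes I_4$ and $I_4\otimes\widetilde{\sqrt{Z}}$, and $H\otimes I$, $I\otimes H$ to $\widetilde{H}\otimes I_4$ and $I_4\otimes\widetilde{H}$, with $\widetilde{\sqrt{Z}},\widetilde{H}\in TG(1)$ the matrices of Eq.~\eqref{newoperations}; send $\cnot$ to the indecomposable permutation matrix $\widetilde{\cnot}$ of the sixteen ontic states that applies the toy analogue of $X$ to the second toy bit conditioned on the ontic state of the first (one of the operations of Fig.~\ref{fig:perm}(b)); and send $\conj$ to the toy reflection $\widetilde{\conj}$ whose single--bit restriction is the permutation already identified above as the toy analogue of complex conjugation. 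Each of these six $16\times16$ orthogonal matrices is then checked to permute the sixty pure epistemic states among themselves, hence to lie in $TG(2)$ by STM~2$'$; let $\widetilde{G}\le TG(2)$ be the subgroup they generate.

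For steps (i) and (iii), build $\Phi$ from the identification of a toy bit with the affine plane over $\Z_2$ (ontic states as points, pure epistemic states as lines, canonical sets of questions as parallel classes --- a repackaging of Spekkens' tetrahedral picture, whose six edges and three opposite--edge pairs reproduce the six lines and three parallel classes of $AG(2,2)$), so that two toy bits correspond to a four--dimensional affine $\Z_2$--space with its symplectic form, pure epistemic states of maximal knowledge to the affine Lagrangian planes (of which there are exactly $60$, split $36+24$ into product and correlated, matching the $36+24$ split of the two--qubit stabilizer states), and two--qubit stabilizer states to the same combinatorial object via the Clifford--symplectic dictionary. Under that dictionary the images of the projective Clifford group are exactly the affine symplectic maps, matched by the toy tensor--product and controlled operations above, while the single extra anti--symplectic generator is matched by $\widetilde{\conj}$. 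Transporting the faithful sixty--point action of $\mathcal{EC}(2)/U(1)$ through $\Phi$ then yields an action whose image is generated by the permutations induced by the six matrices of step (ii), i.e.\ by $\widetilde{G}$ acting on the sixty epistemic states; hence $\widetilde{G}\cong\mathcal{EC}(2)/U(1)$ is of order $23040$. Since $\widetilde{G}\le TG(2)$ and $|TG(2)|=23040$, we get $\widetilde{G}=TG(2)$, proving the proposition.

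The main obstacle is step (iii): pinning down $\Phi$ and verifying the generator--by--generator compatibility amounts to re-establishing Spekkens' two--bit correspondence with the stabilizer formalism, now extended to the anti--unitary (reflection) sector, with the entangling generator $\cnot$ the delicate case since it is the only generator coupling the two subsystems. This verification is finite; in practice we discharge it by direct computation in GAP~\cite{GAP}, the structural argument above explaining why the computation is bound to succeed.
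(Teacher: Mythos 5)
There is a genuine gap, and it sits exactly at the point the paper is most at pains to emphasize. Four of your six generator images in step (ii) are not elements of $TG(2)$: the paper shows that the only $T\in TG(1)$ for which $T\otimes I$ remains valid on two toy bits are the ontic permutations in $S_4$, and in particular $\widetilde{H}\otimes I_4$, $I_4\otimes\widetilde{H}$, $\widetilde{\sqrt{Z}}\otimes I_4$ and $I_4\otimes\widetilde{\sqrt{Z}}$ all send the correlated state of Fig.~\ref{fig:twostates}(b) to an invalid epistemic state (this failure is the paper's ``not completely validity-preserving'' phenomenon, and the basis of Theorem~\ref{th:toyseparability}). So your assignment does not define a map into $TG(2)$ at all, and the check you propose to run on these $16\times 16$ matrices would come back negative.

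The structural dictionary in steps (i) and (iii) breaks down for the same reason. If local Clifford unitaries went to local toy matrices and $\cnot$ to an ontic permutation, the unitary half $\mathcal{C}(2)/U(1)$ would land inside the ontic-permutation part of $TG(2)$, i.e.\ inside Spekkens' group of order $11520$; but the paper verifies that Spekkens' group is \emph{not} isomorphic to $\mathcal{C}(2)/U(1)$ (it has a maximal subgroup of order $720$ that the projective Clifford group lacks), so the unitary/anti-unitary split on the quantum side cannot coincide with the permutation/non-permutation split on the toy side. The isomorphism the paper actually exhibits reflects this: it uses the generating set $\{\conj,\cnot,H\otimes I,H\otimes H,\sqrt{Z}\otimes\sqrt{Z}\}$ and sends $(H\otimes I)\modu$ to a tensor product of two \emph{different non-permutation} $4\times4$ orthogonal matrices (both factors acting nontrivially) and $\cnot\modu$ to $\widetilde{SWAP}$ composed with the tensor square of a non-permutation matrix. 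The paper gives no conceptual proof --- it states the isomorphism was verified in GAP and records the generator images --- so your final appeal to GAP matches the paper's method, but your ``explanation of why the computation is bound to succeed'' predicts the wrong generator images and would steer that computation to a failure.
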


\begin{figure}
\begin{center}
\includegraphics[keepaspectratio,width=120mm]{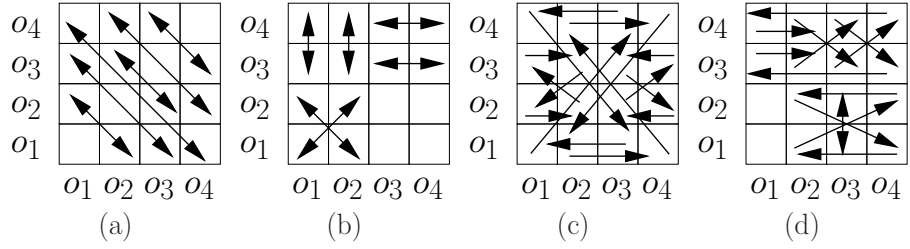}
\caption{\small (a) $\widetilde{SWAP}$, (b) $\widetilde{P_1}$, (c)
$\widetilde{P_2}$ and (d) $\widetilde{P_3}$: four operations on two toy bits.}
\label{fig:tswap}
\end{center}
\end{figure}

We give one such isomorphism explicitly. Let $\widetilde{SWAP}$ denote the toy
operation that swaps rows and columns of ontic states, and let $\widetilde{P_1}$ and $\widetilde{P_2}$ be as shown in Fig.~\ref{fig:tswap}. 
For convenience, we use the generating set $\{\conj, \cnot, H \otimes
I, H \otimes H, \sqrt{Z} \otimes \sqrt{Z} \}$ for $\mathcal{EC}(2)$.
Then the following map, extended to the entire group, is an
isomorphism from $\mathcal{EC}(2)/ U(1)$ to
$TG(2)$:
\begin{align*}
  \conj \langle e^{\imath\theta}I\rangle & \;\mapsto\; \frac{1}{4}
  \left(\begin{matrix}
      1 & 1 & -1 & 1 \\
      1 & 1 & 1 & -1 \\
      -1 & 1 & 1 & 1 \\
      1 & -1 & 1 & 1 \\
    \end{matrix}\right)^{\otimes 2}, \\
  \cnot \langle e^{\imath\theta} I\rangle & \;\mapsto\; \widetilde{SWAP} \cdot
\frac{1}{4}
\left(\begin{matrix}
      -1 & 1 & 1 & 1 \\
      1 & 1 & -1 & 1 \\
      1 & -1 & 1 & 1 \\
      1 & 1 & 1 & -1 \\
  \end{matrix}\right)^{\otimes 2},\\
(H \otimes I) \langle e^{\imath\theta} I\rangle & \;\mapsto\; \frac{1}{4}
\left(\begin{matrix}
    1 & -1 & 1 & 1 \\
    -1 & 1 & 1 & 1 \\
    1 & 1 & -1 & 1 \\
    1 & 1 & 1 & -1 \\
\end{matrix}\right) \otimes \left(\begin{matrix}
  1 & 1 & 1 & -1 \\
  1 & 1 & -1 & 1 \\
  1 & -1 & 1 & 1 \\
  -1 & 1 & 1 & 1 \\
\end{matrix}\right), \\
(H \otimes H) \langle e^{\imath\theta} I\rangle & \;\mapsto\; \widetilde{P_1}
\;,\\
(\sqrt{Z} \otimes \sqrt{Z}) \langle e^{\imath\theta} I\rangle & \;\mapsto\;
\widetilde{P_2}\;.
\end{align*}

A similar GAP computation shows that Spekkens' group of operations for two toy bits is not isomorphic to $\mathcal{C}(2)/U(1)$, despite the fact that both groups have $11520$ elements. One way to verify that the two groups are not isomorphic is the following: while the projective Clifford group contains no maximal subgroups of order $720$, Spekkens' group does. One such maximal subgroup is generated by the operations $(12) \otimes (23)$, $I \otimes (12)$, and $\widetilde{P_3}$ (also shown in Fig.~\ref{fig:tswap}).

As $TG(2)$ is isomorphic to the extended Clifford group--- which contains the
Clifford Group as a proper subgroup---the relaxation of STM~2 to STM~2$'$
results in a group of operations that is isomorphic to the Clifford Group of two
qubits.  We emphasize that this equivalence is a direct consequence of applying
empiricism to STM. 

Unfortunately, the relaxation of STM~2 to STM~2$'$ gives rise to a physically
unreasonable state of affairs.  For a physical model, we expect
that if an operation is valid for a given system, then it should also be valid
when we attach an ancilla to that system; the operations of $TG(2)$
violate this condition. Consider the operation  $\widetilde{H}\otimes I$: under STM~2$'$, both $\widetilde{H}$ and $I$ are valid
operations on an elementary system, yet $\widetilde{H}\otimes I$ is not a valid
operation on the composite system, as it fails to map the correlated state shown in Fig.~\ref{fig:twostates}(b) to a
valid epistemic state. In fact, the subgroups of $TG(1)$ and $TG(2)$ that preserve valid epistemic states when an ancilla is added
are simply Spekkens' original groups of operations for one and two toy bits respectively. 
%More generally, if $P$ and $Q$ are any single toy bit permutations, then any operation of the form $P\widetilde{H}\otimes Q$ fails to preserve validity on an ancilla.

However, just as positive maps serve as tests for entanglement in quantum theory, validity-preserving maps serve as tests of correlation in the toy theory, as we now explain.

Formally, let $\cA_i$ denote the set of operators acting on the Hilbert space $\cH_i$. Then a linear map $\De: \cA_1 \rightarrow \cA_2$ is \defn{positive} if it maps positive operators in $\cA_1$ to positive operators in $\cA_2$: in other words, $\rho \geq 0$ implies $\De\rho \geq 0$. On the other hand $\De$ is \defn{completely positive} if the map
\[
\De \otimes I: \cA_1 \otimes \cA_3 \rightarrow \cA_2 \otimes \cA_3
\]
is positive for every identity map $I: \cA_3 \rightarrow \cA_3$. In other words, a completely positive map takes valid density operators to valid density operators even if an ancilla is attached to the system. Also recall that an operator $\rho \in \cA_1 \otimes \cA_2$ is \defn{separable} if it
can be written in the form
\begin{equation}\label{separable}
 \varrho=\sum_{i=1}^n p_i\varrho_i\otimes\tilde{\varrho}_i,
\end{equation}
for $\varrho_i \in \cA_1$, $\tilde{\varrho}_i \in \cA_2$, and some probability distribution $\{p_i\}$. A well known result in quantum information is that positive maps can distinguish whether or not a state
is separable (Theorem 2~\cite[p.~5]{Horodecki's}):
\begin{theorem}\label{th:Horodecki}
 Let $\varrho$ act on $\mathcal{H}_1\otimes\mathcal{H}_2$. Then $\varrho$ is separable if and only if for
any positive map $\De: A_1\rightarrow A_2$, the operator $(\De \otimes I)\varrho$ is positive. 
\end{theorem}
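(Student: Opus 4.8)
The plan is to prove the two implications separately: the ``only if'' direction directly, and the ``if'' direction by contraposition via a separating-hyperplane argument combined with the Jamio{\l}kowski isomorphism.

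For the ``only if'' direction, suppose $\varrho$ is separable, so that $\varrho=\sum_{i=1}^{n}p_i\varrho_i\otimes\tilde\varrho_i$ as in Eq.~\eqref{separable}. Then linearity of $\De$ gives $(\De\otimes I)\varrho=\sum_i p_i(\De\varrho_i)\otimes\tilde\varrho_i$. Since $\De$ is positive and each $\varrho_i\geq 0$, we have $\De\varrho_i\geq 0$; as $\tilde\varrho_i\geq 0$ and $p_i\geq 0$, every summand is a positive operator, and hence so is the sum. Thus $(\De\otimes I)\varrho\geq 0$.

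For the ``if'' direction I would prove the contrapositive: if $\varrho$ is not separable, exhibit a positive map $\De$ for which $(\De\otimes I)\varrho\not\geq 0$. The first step is to note that the set of separable operators on $\cH_1\otimes\cH_2$ is convex (immediate from Eq.~\eqref{separable}) and, in finite dimensions, closed; since $\varrho$ lies outside it, the separating-hyperplane theorem supplies a Hermitian operator $W$ (an \emph{entanglement witness}) with $\op{Tr}(W\varrho)<0$ but $\op{Tr}(W\sigma)\geq 0$ for every separable $\sigma$, and in particular $\bra{\phi}\otimes\bra{\chi}\,W\,\ket{\phi}\otimes\ket{\chi}\geq 0$ for all product vectors $\ket{\phi}\otimes\ket{\chi}\in\cH_1\otimes\cH_2$. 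The second step is the Jamio{\l}kowski isomorphism: every linear map $\De:\cA_1\to\cA_2$ corresponds bijectively to an operator on $\cH_1\otimes\cH_2$ via its Choi matrix $(\De\otimes I)(\ket{\Omega}\bra{\Omega})$ with $\ket{\Omega}=\sum_k\ket{k}\otimes\ket{k}$, and under this correspondence $\De$ is a positive map precisely when the associated operator is nonnegative on all product vectors. Applying this (with a partial transpose on the first factor to match conventions) converts the witness $W$ into a positive map $\De$. The third step is a direct computation: using the trace pairing $\op{Tr}\big[(\De\otimes I)(\varrho)\,M\big]=\op{Tr}\big[\varrho\,(\De^{*}\otimes I)(M)\big]$ together with the definition of the Choi matrix, the inequality $\op{Tr}(W\varrho)<0$ translates into the existence of a maximally entangled test vector on which $(\De\otimes I)\varrho$ has negative expectation, so $(\De\otimes I)\varrho$ is not positive, as required.

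The main obstacle is the second step: setting up the Jamio{\l}kowski dictionary so that ``$\De$ positive'' corresponds exactly to ``$W$ nonnegative on product vectors'', and then tracking the partial transposes and the unnormalized maximally entangled vector carefully enough that the sign in the third step comes out correctly. The remaining ingredients---the easy direction, convexity and closedness of the separable set, and existence of the separating functional in finite dimensions---are routine.
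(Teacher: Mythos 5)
The paper does not prove this theorem at all: it is quoted verbatim as Theorem 2 of the cited Horodecki--Horodecki--Horodecki paper, so there is no in-paper proof to compare against. Your outline is, in substance, the standard proof from that reference and it is correct: the ``only if'' direction by linearity and positivity of each summand, and the ``if'' direction by contraposition via an entanglement witness $W$ (separating hyperplane applied to the compact convex set of separable operators in finite dimensions) converted into a positive map through the Jamio{\l}kowski isomorphism. The one place that genuinely requires care is the one you already flag: the dictionary under which ``$\De$ positive'' corresponds to ``$W$ nonnegative on product vectors'' involves a complex conjugation (or partial transpose) on one tensor factor, since $\bra{\phi\otimes\chi}(\De\otimes I)(\ket{\Omega}\bra{\Omega})\ket{\phi\otimes\chi}=\bra{\phi}\De(\ket{\bar\chi}\bra{\bar\chi})\ket{\phi}$; once that is tracked consistently the sign in your third step comes out as claimed. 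No gap beyond that bookkeeping.
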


Theorem~\ref{th:Horodecki} says is that maps that are positive but not not completely positive serve as tests for detecting
whether or not a density matrix is separable. An analogous statement can be made for validity preserving maps and correlated states in a two toy bit system. 

Define a transformation $\De$ in STM to be \defn{validity-preserving} if it maps all valid epistemic states to valid epistemic states in a toy system; all operations in $TG(1)$ and $TG(2)$ are validity-preserving. Define $\De$ to be \defn{completely validity-preserving} if $\De \otimes I$ is
validity-preserving for every $I$, where $I$ is the identity transformation on some ancilla toy system. For example, $\widetilde{H} \in TG(1)$ is validity-preserving but not completely validity-preserving. Finally, a two toy bit state is \defn{perfectly correlated} if for any acquisition of knowledge about one of the systems, the description of the other system is refined. The perfectly correlated two toy bit states are precisely the correlated pure states: no mixed states are perfectly correlated.

\begin{theorem}\label{th:toyseparability}
 Let $\sigma$ be a two toy bit epistemic state (pure or mixed).
Then $\sigma$ is perfectly correlated if and only if there exists a one toy bit validity-preserving operation $\Delta$ such that 
$(\De \otimes I)\sigma$ is an invalid two toy bit state.
\end{theorem}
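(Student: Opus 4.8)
The plan is to reduce the statement to the classification of two-toy-bit epistemic states obtained above. Recall that the perfectly correlated states are precisely the $24$ correlated pure states, which in the linear model are the ``graphs of permutations'' $\sigma_\pi=\sum_{i=1}^{4}o_i\otimes o_{\pi(i)}$ for $\pi\in S_4$; every other valid two-toy-bit epistemic state is either of \emph{product form} $\sigma_A\otimes\sigma_B$, with $\sigma_A$ and $\sigma_B$ valid single-toy-bit states (this covers the uncorrelated pure states, the non-maximal states that refine only one subsystem, and the zero-knowledge state), or else a \emph{correlated non-maximal state} $e\otimes f+e'\otimes f'$, where $\{e,e'\}$ and $\{f,f'\}$ are pairs of disjoint single-bit pure states. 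I would also first record the following fact about an arbitrary linear validity-preserving single-bit operation $\Delta$: since $e_{1234}=e_{ij}+e_{kl}$ for each of the three antipodal pairs, linearity forces $\Delta e_{1234}=\Delta e_{ij}+\Delta e_{kl}$ to be a valid state that is a sum of two valid states, which is possible only if $\Delta e_{ij}$ and $\Delta e_{kl}$ are disjoint pure states. Hence $\Delta$ maps pure states to pure states, maps disjoint (antipodal) pairs to disjoint pairs, and fixes $e_{1234}$; no stronger property of $\Delta$ (such as membership in $TG(1)$ or invertibility) is needed.

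For the ``if'' direction I would prove the contrapositive: if $\sigma$ is not perfectly correlated then $(\De\otimes I)\sigma$ is valid for every validity-preserving $\De$. When $\sigma=\sigma_A\otimes\sigma_B$ is of product form we have $(\De\otimes I)\sigma=(\De\sigma_A)\otimes\sigma_B$, which is valid because $\De\sigma_A$ is again a valid single-bit state of the same type (a pure state, or $e_{1234}$). When $\sigma=e\otimes f+e'\otimes f'$ is a correlated non-maximal state we have $(\De\otimes I)\sigma=(\De e)\otimes f+(\De e')\otimes f'$, and since $\{\De e,\De e'\}$ is again a disjoint pair of pure states this is once more a correlated non-maximal state, hence valid. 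These two cases exhaust the non-perfectly-correlated states, so the contrapositive holds.

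For the ``only if'' direction, let $\sigma=\sigma_\pi=\sum_i o_i\otimes o_{\pi(i)}$ be perfectly correlated and take $\De=\widetilde{H}\in TG(1)$, which is validity-preserving. Then $(\widetilde{H}\otimes I)\sigma=\sum_i(\widetilde{H}o_i)\otimes o_{\pi(i)}$; grouping terms by their (distinct) second tensor factors, the coefficient vectors are exactly the columns $\widetilde{H}o_i$ of $\widetilde{H}$, each of which has all four entries equal to $\pm\tfrac{1}{2}$. Hence $(\widetilde{H}\otimes I)\sigma$ is not a $\{0,1\}$-vector and so is not a valid two-toy-bit epistemic state. (Alternatively one may invoke the remark preceding the theorem that $\widetilde{H}\otimes I$ fails to map the correlated state of Fig.~\ref{fig:twostates}(b) to a valid state, and then transport this to an arbitrary $\sigma_\pi$ using the transitive action of $S_4\times S_4$ on the correlated pure states, which intertwines suitably with $\De\otimes I$.) Combining both directions proves the theorem, with $\widetilde{H}$ furnishing a single universal witnessing operation.

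The step requiring the most care is the structural input of the first paragraph: confirming that the non-perfectly-correlated valid epistemic states are exactly the product-form states together with the correlated non-maximal states $e\otimes f+e'\otimes f'$ with \emph{both} pairs disjoint, and in particular that a non-maximal ``mixed'' correlated state cannot be assembled from \emph{overlapping} single-bit pure states --- a short check of which unions of product sets $e\times f$ form an affine subspace of the sixteen ontic states. Granting that inventory, the remaining arguments are the brief case checks above.
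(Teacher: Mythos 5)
Your proof is correct and follows the same overall strategy as the paper's: classify the valid two-toy-bit states, show product-form and correlated-mixed states stay valid under $\De\otimes I$, and use $\widetilde{H}$ as the universal witness for the correlated pure states. The genuine difference is your preliminary lemma that any linear validity-preserving single-bit map sends pure states to pure states and disjoint pairs to disjoint pairs (via $\De e_{1234}=\De e_{ij}+\De e_{kl}$ being a valid $\{0,1\}$-vector). This buys you something real: the paper disposes of the correlated mixed case by writing $\sigma=e_{ab}\otimes e_{cd}+e_{mn}\otimes e_{pq}$ as a sum of uncorrelated pure states and asserting that validity of each image summand implies validity of the sum, which is not automatic --- if $\De$ could send the disjoint pair $\{e_{ab},e_{mn}\}$ to an overlapping pair, the image would be a set of ontic states whose marginal on the first subsystem has size three, violating the knowledge balance principle. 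Your disjointness-preservation lemma rules this out and keeps the image in the correlated-mixed form, so your treatment closes a small gap in the paper's argument. Your explicit computation that $(\widetilde{H}\otimes I)\sigma_\pi$ has entries $\pm\tfrac{1}{2}$ is also more self-contained than the paper's appeal to an earlier unproved remark. Two minor caveats: your lemma does use linearity of $\De$, which you should state as part of the hypothesis (it is implicit in the paper's definition of validity-preserving maps, modeled on positive maps); and the classification of valid two-toy-bit epistemic states that both you and the paper rely on is asserted rather than proved in either place, though you at least flag it as the step needing verification.
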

\begin{proof}
First suppose $\sigma$ is a pure state. If $\sigma$ is uncorrelated, then it has the form $e_{ab}\otimes e_{cd}$, and for any $\Delta\in TG(1)$, the state
\[
(\Delta \otimes I)(e_{ab}\otimes e_{cd})=(\Delta e_{ab})\otimes e_{cd}
\]
is a valid two toy bit state. On the other hand, if $\sigma$ is correlated, then it has the form $(I\otimes P)\sigma_0$, where $\sigma_0$ is the correlated state shown in Fig.~\ref{fig:twostates}(b) and $P \in S_4$ is some permutation of the second toy bit system. In this case, the state
\[
(\widetilde{H} \otimes I)(I\otimes P)\sigma_0=(I \otimes P)(\widetilde{H} \otimes I)\sigma_0
\]
is an invalid state, as we have already seen that $(\widetilde{H}\otimes I)\sigma_0$ is invalid. 

Next suppose $\sigma$ is a mixed state. Then either $\sigma$ is uncorrelated, in which case it has the form $e_{ab}\otimes e_{1234}$, $e_{1234}\otimes e_{ab}$, or $e_{1234}\otimes e_{1234}$, or it is correlated, in which case it has the form $(e_{ab}\otimes e_{cd}+e_{mn}\otimes e_{pq})$, with $\{a,b\}$ disjoint from $\{m,n\}$ and $\{c,d\}$ disjoint from $\{p,q\}$. Any of these mixed states may be written as a sum of pure uncorrelated states. Since pure uncorrelated states remain valid under $\De \otimes I$ for any validity preserving $\De$, it follows that $(\Delta\otimes I)\sigma$ is also a valid state whenever $\sigma$ is a mixed state. Thus, invalidity of a state under a local validity-preserving map is a necessary and sufficient condition for a
bipartite epistemic state (pure or mixed) to have perfect correlation. 
\end{proof}  

In this section we introduced a possible relaxation of STM.  Motivated by
empiricism, we argued for the relaxation of STM~2, from ontic to epistemic
determinism.  We showed that this relaxation gives rise to a group of operations
that is equivalent to the projective extended Clifford Group for one and two
qubits.  However, the operations of $TG(1)$ and $TG(2)$ are physically
unreasonable as they do not represent completely validity-preserving maps. They
do, however, serve as tests for correlations in the toy model.  In
the next section we discuss these results further.

\section{\label{sec:conclusion}Discussion}

In this paper we formulated STM in an axiomatic framework and considered a
possible relaxation---STM~2$'$---in its assumptions. The motivation for
proposing STM~2$'$ is the empirical fact that in a toy universe, an observer is
restricted to knowledge of epistemic states. We
discovered that replacing STM~2 with STM~2$'$ gave rise to a group of operations
that exhibit an isomorphism with the projective extended Clifford Group of
operations (and consequently the projective Clifford group of operations) in
quantum mechanics. This characteristic is not present in STM; while $S_4$ is
isomorphic to $\mathcal{C}(1)/U(1)$, the group of operations for two toy bits in
STM is not isomorphic to $\mathcal{C}(2)/U(1)$.
However, due to the fact that operations arising from STM~2$'$ do not compose
under the tensor product---they are not completely validity-preserving---the
proposed relaxation does not give rise to a physically reasonable model.

Despite this failure, the group of operations generated by STM~2$'$ gives rise to
a very useful tool; namely, the Horodecki criterion for separability in the toy
model. The same operations that render the toy model physically unreasonable
serve as tools for detecting correlations in the toy model.  We believe that the
investigation into possible relaxations of the axioms
of STM gives rise demonstrates the power as well as the limitations of STM. 
Most significantly, we discover that no physically reasonable toy model can
arise from relaxing STM~2 to an epistemic perspective; this robustness is
an indication of the model's power.  On the other hand, we conclude that there
is at least one characteristic of quantum theory that the STM cannot capture, an
equivalence with the Clifford Group of operations. 

\begin{acknowledgments}
  The authors would like to thank Rob Spekkens for his helpful
  discussion of the two toy bit system, Nathan Babcock, Gilad Gour, and an
anonymous referee for insightful comments
and suggestions. This research has been supported by NSERC, MITACS, a CIFAR Associateship, and iCORE.
\end{acknowledgments}

\nocite{*}
%\bibliography{toyreps}
%\input{SRS8.bbl}
 
\end{document}